\def\UseBibLatex{1}
  \newcommand{\AlgVer}[1]{}
  \newcommand{\NotAlgVer}[1]{#1}
  \newcommand{\AlgVer}[1]{#1}
  \newcommand{\NotAlgVer}[1]{}
\def\input@path{{styles/}{../styles/}}
\numberwithin{figure}{section}%
\numberwithin{table}{section}%
\numberwithin{equation}{section}%
   \theoremstyle{thmstyleone}%
   \newtheorem{theorem}{Theorem}%
   \newtheorem{lemma}[theorem]{Lemma}%
   \theoremstyle{thmstyletwo}%
   \newtheorem{example}[theorem]{Example}%
   \newtheorem{remark}[theorem]{Remark}%
   \theoremstyle{thmstyletwo}%
   \newtheorem{problem}[theorem]{Problem}%
\providecommand{\BibLatexMode}[1]{}
\providecommand{\BibTexMode}[1]{#1}
  \renewcommand{\BibLatexMode}[1]{}
  \renewcommand{\BibTexMode}[1]{#1}
  \renewcommand{\BibLatexMode}[1]{#1}
  \renewcommand{\BibTexMode}[1]{}
\theoremstyle{plain}%
\newtheorem{theorem}{Theorem}[section]
\newtheorem{lemma}[theorem]{Lemma}
\theoremstyle{plain}%
\newtheorem*{remark:unnumbered}[theorem]{Remark}%
\newtheorem{remark}[theorem]{Remark}%
\newtheorem{example}[theorem]{Example}
\newtheorem{problem}[theorem]{Problem}
\newcommand{\myqedsymbol}{\rule{2mm}{2mm}}
\theoremstyle{nonumberplain}%
\newtheorem{proof}{Proof:}%
\newcommand{\HLink}[2]{\hyperref[#2]{#1~\ref*{#2}}}
\newcommand{\HLinkSuffix}[3]{\hyperref[#2]{#1\ref*{#2}{#3}}}
\newcommand{\figlab}[1]{\label{fig:#1}}
\newcommand{\figref}[1]{\HLink{Figure}{fig:#1}}
\newcommand{\thmlab}[1]{{\label{theo:#1}}}
\newcommand{\thmref}[1]{\HLink{Theorem}{theo:#1}}
\newcommand{\seclab}[1]{\label{sec:#1}}
\newcommand{\secref}[1]{\HLink{Section}{sec:#1}}
\newcommand{\problab}[1]{\label{prob:#1}}
\newcommand{\probref}[1]{\HLink{Problem}{prob:#1}}%
\newcommand{\lemlab}[1]{\label{lemma:#1}}
\newcommand{\lemref}[1]{\HLink{Lemma}{lemma:#1}}%
\providecommand{\eqlab}[1]{}%
\renewcommand{\eqlab}[1]{\label{equation:#1}}
\DeclareFontFamily{U}{BOONDOX-calo}{\skewchar\font=45 }
\DeclareFontShape{U}{BOONDOX-calo}{m}{n}{
  <-> s*[1.05] BOONDOX-r-calo}{}
\DeclareFontShape{U}{BOONDOX-calo}{b}{n}{
  <-> s*[1.05] BOONDOX-b-calo}{}
\DeclareMathAlphabet{\mathcalb}{U}{BOONDOX-calo}{m}{n}
\SetMathAlphabet{\mathcalb}{bold}{U}{BOONDOX-calo}{b}{n}
\DeclareMathAlphabet{\mathbcalb}{U}{BOONDOX-calo}{b}{n}
\providecommand{\remove}[1]{}%
\newcommand{\Set}[2]{\left\{ #1 \;\middle\vert\; #2 \right\}}
\newcommand{\pth}[2][\!]{\mleft({#2}\mright)}%
\newcommand{\brc}[1]{\left\{ {#1} \right\}}
\newcommand{\cardin}[1]{\left\vert {#1}  \right\vert}
\renewcommand{\th}{th\xspace}
\renewcommand{\Re}{\mathbb{R}}%
\newcommand{\USet}{U}%
\newcommand{\GSet}{G}%
\newcommand{\GSSet}{\mathcal{G}}%
\newcommand{\HS}{\Mh{\Pi}}%
\newcommand{\HSA}{\mathcal{Y}}%
\newcommand{\HSB}{\mathcal{Z}}%
\newcommand{\HSS}{\mathcal{S}}%
\newcommand{\curr}{\mathcal{C}}%
\newcommand{\he}{\Mh{\mathcalb{e}}}%
\newcommand{\heo}{\Mh{\mathcalb{o}}}%
\newcommand{\hy}{\Mh{\mathcalb{y}}}%
\newcommand{\hz}{\Mh{\mathcalb{z}}}%
\newcommand{\PTD}{\textsf{PTD}\xspace}
\newcommand{\ZZ}{\mathbb{Z}}%
\newcommand{\margY}[2]{\Delta^{}_{#2}\pth{#1}}%
\newcommand{\fmax}{f_{\max}}
\newcommand{\demandChar}{\Mh{\mathsf{d}}}%
\newcommand{\demandX}[1]{\demandChar\pth{ #1 }}%
\newcommand{\demandY}[2]{\demandChar_{#1}\pth{ #2 }}%
\newcommand{\faceY}[2]{\Mh{\mathrm{face}}\pth{ #1, #2}}%
\renewcommand{\th}{th\xspace}
\newcommand{\Arr}{\Mh{\mathop{\mathrm{\EuScript{A}}}}}%
\newcommand{\ArrX}[1]{\Arr\pth{#1}}%
\newcommand{\cell}{\Mh{\psi}}%
\newcommand{\pA}{\Mh{p}}%
\newcommand{\pB}{\Mh{q}}%
\providecommand{\Mh}[1]{#1}%
\newcommand{\pa}{\Mh{x}}%
\providecommand{\P}{\Mh{P}}%
\renewcommand{\P}{\Mh{P}}%
\newcommand{\Opt}{\mathcal{O}}
\newcommand{\kopt}{k}
\newcommand{\PCMS}{\textsf{PCMS}\xspace}
\newcommand{\cutX}[1]{\mathrm{cut}\pth{#1}}%
\newcommand{\RMC}{\textsf{RMC}\xspace}%
\newlength{\savedparindent}
\definecolor{almostblack}{rgb}{0, 0, 0.3}
\newcommand{\emphw}[1]{{\textcolor{almostblack}{\emph{#1}}}}%
\definecolor{blue25}{rgb}{0,0,0.7}
\providecommand{\emphic}[2]{%
   \textcolor{blue25}{%
      \textbf{\emph{#1}}}%
   \index{#2}}
\providecommand{\emphi}[1]{\emphic{#1}{#1}}
\newcommand{\F}{\Mh{\mathcal{F}}}%
\newcommand{\atgen}{\symbol{'100}}
\newcommand{\SarielThanks}[1]{\thanks{Department of Computer Science;
      University of Illinois; 201 N. Goodwin Avenue; Urbana, IL,
      61801, USA; {\tt sariel\atgen{}illinois.edu}; {\tt
         \url{http://sarielhp.org/}.} #1}}
\newcommand{\MitchellThanks}[1]{%
   \thanks{%
      Department of Computer Science;
      University of Illinois; 201 N. Goodwin Avenue; Urbana, IL,
      61801, USA; {\tt mfjones2\atgen{}illinois.edu}; {\tt
         \url{http://mfjones2.web.engr.illinois.edu/}.} #1}}
\begin{document}

\NotAlgVer{%
   \title{Few Cuts Meet Many Point Sets%
      \thanks{Sariel Har-Peled partially supported by NSF AF awards
         CCF-1421231, CCF-1217462, and CCF-1907400. Mitchell Jones
         partially supported by NSF AF awards CCF-1421231 and
         CCF-1907400.}%
   }

   \author{%
      Sariel Har-Peled\SarielThanks{}%
      \and%
      Mitchell Jones\MitchellThanks{}%
   } }

\AlgVer{%
   \institute{%
      S. Har-Peled \at %
      Department of Computer Science; University of Illinois; %
      201 N. Goodwin Avenue; Urbana, IL, 61801, USA. %
      \\
      \email{sariel@illinois.edu}%
      \and%
      M. Jones \at %
      Department of Computer Science; %
      University of Illinois; %
      201 N. Goodwin Avenue; %
      Urbana, IL, 61801, USA. \\ %
      \email{mfjones2@illinois.edu}%
   }
   
}

\AlgVer{%
\title[Few Cuts Meet Many Point Sets]{Few Cuts Meet Many Point Sets%
   }%

\author[1]{\fnm{Sariel} \sur{Har-Peled}} \email{sariel@illinois.edu}
\author[1]{\fnm{Mitchell} \sur{Jones}}
\email{mitchell.jones1994@gmail.com}

\affil*[1]{%
   \orgdiv{Department of Computer Science}, %
   \orgname{University of Illinois Urbana-Champaign}, %
   \orgaddress{%
      \street{201 N. Goodwin Avenue}, %
      \city{Urbana}, %
      \postcode{61801}, %
      \state{IL}, %
      \country{USA}}%
}

}

\NotAlgVer{\maketitle}

\abstract{%
   We study the problem of how to split many point sets in $\Re^d$
   into smaller parts using a few (shared) splitting hyperplanes. This
   problem is related to the classical Ham-Sandwich Theorem. We
   provide a logarithmic approximation to the optimal solution using
   the greedy algorithm for submodular optimization.  }

\AlgVer{%
   \keywords{Ham-Sandwich Theorem, Submodular optimization, Approximation
      algorithms}%
}

\AlgVer{\maketitle}

\section{Introduction}

\subsection{Motivation \& the problem}

A basic problem in algorithms is partitioning the data effectively, so
that one can apply divide and conquer algorithms. Recently, there was
significant progress \cite{mp-mppsr-15, aaez-eagpp-21, s-pmit-22} on
using polynomials to perform such partitions (e.g., polynomial
Ham-Sandwich Theorem) to derive better combinatorial bounds (and in
some cases, algorithms).  Thus, polynomials provide a ``universal''
solution to this problem -- however, there are some technical
difficulties in handling polynomials efficiently. This work deals with
alternative partitioning geometric schemes using lines or hyperplanes,
and figuring out how one can do it efficiently.

\paragraph{Example: Separating points by a polynomial.}
As a concrete example, consider the problem of splitting a point set
$\P \subset \Re^2$ into singletons. This requires computing a non-zero
polynomial $p(x,y)$, with a zero set
$Z = \Set{ (x,y) \in \Re^2 }{p(x,y) = 0}$, such that for every point
of $\P \setminus Z$ lies in its own connected component of
$\Re^2 \setminus Z$.

Such a polynomial can be computed using the polynomial Ham-Sandwich
theorem. At the $i$\th stage, the point set is partitioned into $2^i$
sets $\P^i_1, \ldots, \P^i_{2^i}$ of similar cardinality. The idea is
now to lift the points of $\P$ into $2^{i}$ dimensions. To this end,
let $\F(i)$ be the set of the first $i$ monomials over $x$ and $y$
ordered by their degrees (i.e.,
$\F(i) = \{ x, y, xy,x^2, y^2, xy, x^3, \cdots \}$). One then map a
point $(x,y) \in \Re^2$, to the corresponding point
$(x,y, xy,x^2,y^2, x^3, \cdots)$ in $2^i$ dimensions, where each
coordinate is a monomial from the set $\F(i)$. In the lifted space,
one can now halve all $2^i$ sets by a single hyperplane, as guaranteed
by the Ham-Sandwich Theorem, which in the original plane corresponds
to a polynomial. This breaks $\P$ into $2^{i+1}$ sets, and one
continues to the next iteration. If $f_i$ is the polynomial computed
in the $i$\th iteration, for $i=1, \ldots, h = \log n$, then the zero
set $Z_f$ of the product polynomial $f(x,y) = \prod_{i=1}^h f_i(x,y)$
breaks the plane into the desired components, as can be easily
verified.

\paragraph{Why partitioning by polynomials is sometime not
   sufficient.}

The main issue is that the zero sets of polynomials are not easy to
manipulate. If one preserves the representation of $f$ as a product
polynomial, as described above, then it is easy to decide if two
points are in the same connected component of $\Re^2 \setminus
Z_f$. However, this task becomes much harder if the polynomial is not
provided in this form. Furthermore, this representation is not easy to
modify and adapt (for example, modifying the representation if a few
more points are inserted).  As mentioned above, a natural alternative
is to separate points by lines (or hyperplanes in higher
dimensions). Here, two points $\pA, \pB$ are separated by a given set
of lines if there is at least one line in the set that intersects the
interior of the segment $\pA \pB$.

\paragraph{The specific problem: Halving point sets.}
The input is made out of $m$ sets $\P_1, \ldots, \P_m$ of points in
$\Re^d$, not necessarily disjoint (with $m > d$). Our goal is to split
these sets into equal parts using a minimal number of hyperplanes. For
$m\leq d$, the Ham-Sandwich Theorem states that one can bisect all of
the sets using a single hyperplane. However, for $m > d$ and
non-degenerate inputs, this is no longer possible. In particular, the
number of point sets $m$ might be significantly larger than $d$. One
way to get around this restriction is via the polynomial Ham-Sandwich
Theorem \cite{st-gst-42}, as described above.

Here, we are interested in what can be done with restricted entities,
such as (several) hyperplanes.  To keep the problem feasible, we
somewhat relax the problem---the requirement is no longer that each
piece of $\P_i$ is exactly half the size of the original set, but
rather that it is sufficiently small.

\begin{figure}[t]
    \centerline{\includegraphics{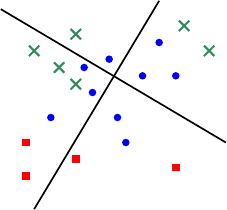}}%
    \captionsetup{width=.8\linewidth}
    \caption{Given three point sets, suppose the goal is to break the
       green (cross) point set into sets with at most three points,
       the blue (dot) point set into sets with at most four points,
       and the red (square) point set into sets with at most two
       points. This can be achieved using two separating lines.}
    \figlab{example}
\end{figure}

\begin{problem}
    \problab{prob}%
    Let $\P_1, \ldots, \P_m$ be $m > d$ point sets in $\Re^d$, not
    necessarily disjoint, with $n = \sum_i \cardin{ \P_i}$. Let
    $\mu_1, \ldots, \mu_m$ be integers with
    $0 < \mu_i \leq \cardin{\P_i}$. The goal is to compute the
    smallest set of hyperplanes $H$, such that for every cell $\cell$
    in the arrangement $\ArrX{H}$ of hyperplanes,
    $\cardin{\P_i \cap \cell} = \cardin{\Set{p \in \P_i}{p \in c}}
    \leq \mu_i$ for all $i$.  See \figref{example} for an example.
\end{problem}

This problem is interesting even for $d=2$, $m=1$, and $\mu_1=1$ --
this is the problem of breaking a set of points in the plane into
singletons using lines. Currently, only a logarithmic approximation is
known \cite{hj-ospl-20}.

\paragraph{Applications.}

One natural application of this problem comes from machine learning.
Given a (single) point set $\P$ of size $n$ in $\Re^d$ and a
collection of features $f_1, \ldots, f_m$, where
$f_i : \Re^d \to \Re$, $f_i$ \emphw{distinguishes} between two points
$p$ and $q$ if $f_i(p)$ and $f_i(q)$ have different signs. Given a
collection of features $S \subseteq \{f_1, \ldots, f_m\}$ one can
assign each point $p$ a vector $v_S(p) \in \{-1,1\}^{\cardin{S}}$,
where each entry of $v_S(p)$ is the sign of a feature in $S$ evaluated
at $p$.  The point $v_S(p)$ is the \emphw{signature} of $p$ with
respect to $S$.  Consider the task of choosing a subset of features
$S \subseteq \{f_1, \ldots, f_m\}$, where $\cardin{S}$ is as small as
possible, such that for any $u \in \{-1,1\}^{\cardin{S}}$, the number
of points with the same signature as $u$ is at most $n/2$, formally
$\cardin{\Set{p \in \P}{u = v_S(p)}} \leq n/c$, with $c=2$.  The
choice $c=2$ in the last statement is arbitrary -- other values might
be desired, but if $c$ is too large, then there is no possible
solution. For example, $c=2^{\cardin{S}}$ is not feasible for
$\cardin{S} \gg d$, as an arrangement of $\cardin{S}$ hyperplanes in
$\Re^d$ has only $O(\cardin{S}^d)$ different cells (of various
dimensions).

Furthermore, one would like to apply this to several point sets
$\P_1, \ldots, \P_\ell$, where we would like to select the smallest
number of features $S$ such that for all
$u \in \{-1,1\}^{\cardin{S}}$,
$\cardin{\Set{p \in \P_i}{u = v_S(p)}} \leq \cardin{\P_i}/2$, for all
$i = 1, \ldots, \ell$.  A natural scenario for such an application is
in the realm of big data. Given a collection of (large) data sets, it
needs to be divided among different computers. The fewer the features
needed to get a split as described above, the faster one can decide
where to send such a point. Here, the required guarantee is that each
set gets reduced to at most half its size.

For the case where all the points have to be singletons in the induced
partition of features, this can be interpreted as a non-linear
dimension reduction of the input set into a hypercube, where the
dimension of the hypercube is as small as possible. Indeed, once we
picked a set $s$ of hyperplanes, $h_1, \ldots, h_s$, each one them has
an associated sign function $f_i(\pA) \in \{0, 1\}$, where a point
$\pA$ (not lying on any of the planes) has $f_i(\pa) = 1$ if $\pA$ is
on one side of $h_i$, and $0$ if $\pA$ is on the other size. This
naturally defines an embedding of $\P$ to the hypercube $\{0,1\}^s$,
as for all $ \pA \in \P$, we have
$F(\pA) = \bigl(f_1(\pA), \ldots, f_s(\pA) \bigr) \in \{0,1\}^s$.

\subsection{Background}

\paragraph{Ham sandwich theorem.}
The Ham-Sandwich Theorem is a well studied problem in both mathematics
and computer science. Since its inception, there have been many
results related to computing such cuts in higher dimensions
\cite{lms-ahsc-94}, as well as generalizations of the theorem
\cite{bhj-scsmh-08,bs-tcms-18,r-emdh-96,s-hsccts-19,st-gst-42}.  For
example, one such generalization is the following: Given well
separated convex bodies $C_1, \ldots, C_d$ in $\Re^d$ and constants
$\mu_i \in [0,1]$, there exists a unique hyperplane $h$ that contains
at least a $\mu_i$ fraction of the volume on the positive side $h^+$
for $i = 1,\ldots, d$ \cite{bhj-scsmh-08}. This result was then
extended to discrete point sets under certain conditions
\cite{sz-ghsc-10}. Notably, in this paper we consider the case when
the number of point sets can be much larger than the ambient dimension
$d$. The problem of simultaneously bisecting more than $d$ convex
bodies in $\Re^d$ using multiple hyperplanes has been studied
combinatorially \cite{bs-tcms-18,s-hsccts-19}, whereas our focus is on
the algorithmic aspects.

Other generalizations include the polynomial Ham-Sandwich Theorem, in
which one is interested in partitioning a point set using polynomials
rather than hyperplanes \cite{kms-spctd-12,st-gst-42}.  This
generalization, and the original Ham-Sandwich Theorem has a variety of
applications in geometric range searching
\cite{ams-rsss2-13,m-grs-94}.

\paragraph{Partial set cover.}

An instance of the \emph{set cover} problem is a pair $(\GSet, \HS )$,
where $\HS \subseteq 2^\GSet$. The problem is to compute a minimum
number $t$ of edges $f_1, \ldots, f_t \in \HS$ such that
$\cup_i f_i = \GSet$.

In the partial set cover problem, one is interested in covering at
least a certain fraction of the elements in a set system, using as few
sets as possible. Specifically, an instance of this problem is a tuple
$(\GSet, \HS, \alpha )$ (first two parameters are as in the set cover
problem, and $\alpha \in [0,1]$), and the problem is to compute the
minimum number $t$ of edges $f_1, \ldots, f_t \in \HS$, such that
$ \cardin{\cup_{i=1}^t f_i} \geq \alpha \cardin{\GSet}$.  For our
purposes, we need a parallel version of this problem (with many set
systems sharing sets, each with its own demand) to model our
problem. This variant is formally defined in \probref{pcms} below.

For the standard partial set cover problem, an
$O( \log n)$-approximation is well known, and follows from the greedy
algorithm (see below for details). In geometric settings, Inamdar and
Varadarajan \cite{iv-pcgss-18} showed that partial set cover can be
approximated to within $O( \beta)$, where $\beta$ is the approximation
ratio for the set cover version of the problem. Because many geometric
problems admit much better than $O(\log n)$-approximations, this
results in an improvement to the partial set cover version of the
problem. However, it is not clear how to apply their algorithm in the
parallel setting.

\subsection{Our results}

We reduce \probref{prob} to a generalized instance of \emph{partial
   set cover}, where we allow multiple ground sets, with different
demands, and show that the standard greedy algorithm for submodular
optimization can be applied to this problem.

\paragraph{Sketch of the greedy algorithm.}
To solve \probref{prob}, let $U = \cup_i \P_i$ and let $H$ be the
collection of all combinatorially different hyperplanes with respect
to $U$.  Consider the arrangement $\Arr = \ArrX{H}$ of $H$.  We
introduce an edge between a pair of points of $\P$ if they lie in the
same cell of $\Arr$. If we consider the process of adding the
hyperplanes from $H$ as an incremental process, then initially every
point is in the same cell as all the other points. Modeling this as a
graph, we start with a clique, and every hyperplane $h$ added
disconnects the edges which correspond to segments that $h$
intersects. In particular, a point is in a cell with at most $m$
points if it has degree $m-1$ in the remaining graph. As such, this
can be interpreted as a parallel version of set cover, where every
vertex induces its own instance, which requires a certain number of
edges adjacent to it to be covered (i.e., cut). Naturally, parallel
versions of set cover can be solved using a greedy algorithm that
picks the hyperplane that cuts the largest number of edges that still
need cutting (being somewhat informal). However, it is somewhat more
natural to describe the greedy algorithm using the framework of
submodular optimization.

\paragraph{Paper organization.}
In \secref{background} we provide the necessary background on
minimization under submodular constraint needed for our main
result. We then show how to solve the multiple partial set cover
problem in \secref{pcms}. Next, in \secref{demand}, we study the
problem of partitioning a set into smaller sets, such that each
element in each of the smaller sets meet a given demand
requirement. The final result, stated in \thmref{the:result}, provides
a logarithmic approximation for our problem by reducing it to the
aforementioned problems.

\section{Preliminaries}
\seclab{background}

For a set $X$, and an element $x$, let $X + x = X \cup \brc{ x}$, and
$X-x = X \setminus \brc{x}$. A \emphw{set system} is a pair
$(\GSet,\HS)$, with $\HS \subseteq 2^\GSet$. The set system
$(\GSet, \HS)$ can also be viewed as a hypergraph with the vertex set
$\GSet$, and the sets in $\HS$ as \emphi{edges}.

\subsection{Submodular minimization}

For the sake of completeness, we present the analysis of the greedy
algorithm for finding a minimal solution satisfying an integer valued
submodular constraint.  In this case, the task is to compute the
smallest set of edges that provides the same utility as using all the
edges available.

Let $(\GSet,\HS)$ be a given set system, and assume we have a monotone
function $f:2^\HS \rightarrow \ZZ$. Here a function is
\emphi{monotone} if $\HSB \subseteq \HSA \subseteq \HS$ implies that
$f(\HSB) \leq f(\HSA) \leq f(\HS)$.  Intuitively, the function
$f(\HSB)$ measures the \emphw{benefit} of a set $\HSB$ -- the higher
the value of $f$ is, the higher the benefit. In particular,
$\fmax = f(\HS)$ is the maximum benefit possible.

We also assume that $f$ is \emphw{submodular}, that is for any
$\he \in \HS$, and for all
$\HSB \subseteq \HSA \subseteq \HS \setminus \brc{ \he}$, we have that
\begin{align*}
 \qquad
  \margY{\he}{\HSB} = 
  f\pth{ \HSB + \he } - f(\HSB) \geq 
  f\pth{ \HSA + \he }  - f(\HSA)
  = \margY{\he}{\HSA}. 
\end{align*}
Submodularity is known in economics as \emphw{diminishing returns} --
the marginal benefit (per unit) of allocating more resources to solve
a problem decreases as more resources are allocated.

\begin{problem}
    \problab{max}%
    Under the above settings, the problem at hand is to compute (or
    approximate) the smallest (cardinality) set $\Opt \subseteq \HS$,
    such that $f(\Opt) = \fmax$.
\end{problem}

\begin{example}
    Consider an instance of set cover $(\GSet, \HS )$, with
    $n = \cardin{\GSet}$. Given a family $\HSB \subseteq \HS$ of
    edges, its \emph{benefit} is the number of elements in $\GSet$ the
    edges of $\HSB$ cover. That is,
    $f(\HSB) = \cardin{\cup_{\hz \in \HSB}^{} \hz}$. It is not hard to
    verify that $f$ is monotone and submodular. Solving \probref{max}
    here corresponds to computing a minimum set cover for $\GSet$.
\end{example}

Consider the greedy algorithm that starts with an empty solution
$\curr_0$. In the $i$\th iteration, the algorithm picks the edge
$\he_i'\in \HS$ that maximizes the value
$f(\curr_{i-1} +\he_i') - f(\curr_{i-1})$, and updates
$\curr_i = \curr_{i-1} + \he_i'$.  The algorithm stops when
$f(\curr_i) = \fmax = f\pth{ \HS}$.

\begin{theorem}[Wolsey \cite{w-agass-82}]
    \thmlab{wolsey}%
    Given a set system $(\GSet,\HS)$, and a non-negative monotone
    submodular function $f:2^\HS \rightarrow \ZZ$, the greedy
    algorithm, described above, outputs a solution with
    $O( \kopt \log \fmax )$ edges of $\HS$, where
    $\kopt = \cardin{\Opt}$ is the size of the smallest set
    $\Opt \subseteq \HS$ such that $f(\Opt) = \fmax = f( \HS)$.
\end{theorem}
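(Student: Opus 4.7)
The plan is to run the standard potential argument showing that each greedy iteration closes a $1/\kopt$ fraction of the remaining gap $\fmax - f(\curr_{i-1})$, and then to use the integrality of $f$ to convert the resulting geometric decay into $O(\kopt \log \fmax)$ rounds.

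First I would bound the greedy's marginal gain per step. Fix an optimal set $\Opt = \brc{\heo_1, \ldots, \heo_\kopt}$ with $f(\Opt) = \fmax$, and write $\Opt_j = \brc{\heo_1, \ldots, \heo_j}$, with $\Opt_0 = \emptyset$. Combining monotonicity, a telescoping identity, and submodularity gives
\begin{align*}
  \fmax - f(\curr_{i-1})
  &\leq f\pth{\curr_{i-1} \cup \Opt} - f(\curr_{i-1})\\
  &= \sum_{j=1}^{\kopt} \pth{ f\pth{\curr_{i-1} \cup \Opt_j} - f\pth{\curr_{i-1} \cup \Opt_{j-1}} }\\
  &= \sum_{j=1}^{\kopt} \margY{\heo_j}{\curr_{i-1} \cup \Opt_{j-1}}
  \;\leq\; \sum_{j=1}^{\kopt} \margY{\heo_j}{\curr_{i-1}},
\end{align*}
where the final inequality applies the paper's ``decreasing marginals'' form of submodularity to the nested pair $\curr_{i-1} \subseteq \curr_{i-1} \cup \Opt_{j-1}$ for each $j$. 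Consequently some $\heo_j \in \Opt$ satisfies $\margY{\heo_j}{\curr_{i-1}} \geq \pth{\fmax - f(\curr_{i-1})}/\kopt$, and by the definition of the greedy rule the element $\he_i'$ actually chosen has at least this much marginal gain.

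Next I would set $\Delta_i = \fmax - f(\curr_i)$; the bound above yields the recurrence $\Delta_i \leq \Delta_{i-1}\pth{1 - 1/\kopt}$, so by induction $\Delta_i \leq \fmax \cdot \pth{1 - 1/\kopt}^i \leq \fmax \cdot e^{-i/\kopt}$. Choosing $i = \ceil{\kopt \ln \fmax} + 1$ forces $\Delta_i < 1$, and since $f$ takes integer values this forces $\Delta_i = 0$, i.e., $f(\curr_i) = \fmax$. Hence the greedy terminates after $O(\kopt \log \fmax)$ iterations, which is the claimed bound.

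The only delicate part is setting up the telescoping inequality correctly: one must be careful to apply submodularity only to valid nested pairs, and to handle the corner case $\heo_j \in \curr_{i-1}$ (where the marginal is $0$ and the corresponding telescoping term also vanishes, so the inequality still holds). Once that step is in place, the remainder is a routine geometric-series manipulation combined with the observation that an integer-valued nonnegative quantity strictly less than $1$ must in fact equal $0$.
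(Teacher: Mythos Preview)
Your proposal is correct and follows essentially the same argument as the paper: telescoping the gap $\fmax - f(\curr_{i-1})$ over the elements of $\Opt$ added to $\curr_{i-1}$, invoking submodularity to lower-bound the greedy marginal by $(\fmax - f(\curr_{i-1}))/\kopt$, and then iterating the resulting $(1-1/\kopt)$ contraction until integrality forces the deficit to zero. The only cosmetic differences are that you apply submodularity termwise before averaging (whereas the paper averages first and applies submodularity to a single term) and that you explicitly note the $\heo_j \in \curr_{i-1}$ corner case.
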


\begin{proof}
    This result is by now classical, and we include the proof only for
    the sake of completeness.  Let
    $\Opt = \brc{\heo_1, \ldots, \heo_\kopt}$ be the optimal
    solution. Consider a current solution $\curr_i \subseteq \HS$ at
    iteration $i$, and observe that by monotonicity, we have
    \begin{equation*}
        \fmax = f(\Opt) \leq f(\curr_i \cup \Opt ) \leq \fmax.        
    \end{equation*}
    As such, we have $ f(\curr_i \cup \Opt ) = \fmax$.  Let
    $\Delta_i = f(\Opt) - f(\curr_i)$ be the \emphi{deficiency} of
    $\curr_i$.  For $j=0, \ldots, k$, let
    \begin{math}
        \HSS_j = \curr_i \cup \brc{ \heo_1, \ldots, \heo_j}.
    \end{math}
    Set $\delta_j = f\pth{ \HSS_j} - f\pth{ \HSS_{j-1}}$. We have that
    \begin{align*}
      \sum_{j=1}^k \delta_j%
      =%
      f\pth{\curr_i \cup \Opt} - f(\curr_i)%
      =%
      \fmax - f(\curr_i) = \Delta_i.
    \end{align*}
    Hence, there is an index $j$, such that
    $\delta_j \geq \Delta_i/k$. Now, by submodularity, we have that
    \begin{align*}
      f(\curr_i + \heo_j) - f(\curr_i)%
      \geq %
      f(\HSS_{j-1} + \heo_j) - f(\HSS_{j-1})%
      = %
      \delta_j
      \geq \Delta_i/k.
    \end{align*}
    However, the greedy algorithm adds an element $\he$ that maximizes
    the value of $\Delta_{\curr_i}(\he)$, which is at least
    $\Delta_i/k$. Put differently, the added element decreases the
    deficiency of the current solution by a factor $\leq
    1-1/\kopt$. Therefore the deficiency in the end of the $i$\th
    iteration is at most
    \begin{math}
        \Delta_i \leq (1-1/\kopt)^i \Delta_0 = (1-1/\kopt)^if(\Opt).
    \end{math}
    This quantity is less than one for $i = O( \kopt \log \fmax )$.
\end{proof}

\section{Problems and reductions}

\subsection{\PCMS: Partial cover for multiple sets}
\seclab{pcms}

\begin{problem}[\PCMS]
    \problab{pcms}%
    The input is a set system $(\USet,\HS)$, and a collection
    $\GSSet = \Set{\GSet_i \subseteq \USet}{i=1,\ldots,m}$ of ground
    sets, where the universe $\USet$ is of size $n$.  In addition,
    each ground set $\GSet_i$ has a \emphi{demand}, denoted by
    $\demandX{\GSet_i}$, which is a non-negative integer.  A valid
    solution for such an instance, is a collection
    $\HSA \subseteq \HS$, such that $\bigcup_{\hy \in \HSA} \hy$
    covers at least $\demandX{\GSet_i}$ elements of $\GSet_i$, for
    $i=1, \ldots, m$.
\end{problem}

\begin{remark}
    In the following, to simplify the exposition, we assume that the
    given instances being solved are feasible. Otherwise, the
    approximation algorithm would fail to generate a solution thus
    proving the unfeasibility of the given instance.
\end{remark}

\begin{lemma}
    \lemlab{pcms}%
    Let $\pth{ \USet, \GSSet, \HS}$ be an instance of \emph{partial
       cover of multiple sets} (\PCMS), where $n = \cardin{\USet}$,
    $\GSSet$ is a family of $m$ ground sets, and $\HS$ is a family of
    edges. Furthermore, each ground set of $\GSSet$ has an associated
    demand.  Then, the greedy algorithm computes, in polynomial time,
    an $O\bigl( \log (mn) \bigr)$-approximation to the minimal size
    set $\Opt \subseteq \HS$ that meets all the demands of the ground
    sets.
\end{lemma}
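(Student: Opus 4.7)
The plan is to cast \PCMS as the minimization of a monotone, integer-valued submodular function on $\HS$, so that \thmref{wolsey} applies directly. For $\HSA \subseteq \HS$, write $U(\HSA) = \bigcup_{\he \in \HSA} \he$ for the set of elements it covers, and for each ground set $\GSet_i$ define the truncated coverage
\[
   g_i(\HSA) \;=\; \min\!\pth{\cardin{U(\HSA) \cap \GSet_i},\; \demandX{\GSet_i}}.
\]
I would then set $f(\HSA) = \sum_{i=1}^m g_i(\HSA)$. By construction $f$ is integer valued, $\HSA$ is a valid \PCMS solution if and only if $f(\HSA) = \sum_i \demandX{\GSet_i}$, and (assuming feasibility of the instance) this common value equals $\fmax = f(\HS)$.

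Next I would verify monotonicity and submodularity of $f$. Monotonicity is immediate, since $\HSB \subseteq \HSA$ gives $U(\HSB) \subseteq U(\HSA)$, and both $\cardin{U(\cdot)\cap \GSet_i}$ and the $\min$-with-constant operation are monotone. For submodularity, the untruncated coverage $h_i(\HSA) = \cardin{U(\HSA) \cap \GSet_i}$ is the classical coverage utility and is well known to be submodular. Truncating a monotone submodular function by a constant preserves submodularity, by a short case analysis on whether the function lies above or below the threshold at the four sets $\HSB$, $\HSB + \he$, $\HSA$, $\HSA + \he$ appearing in the submodular inequality. Since sums of monotone submodular functions remain monotone submodular, $f$ satisfies the hypotheses of \thmref{wolsey}.

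To finish, I would bound $\fmax = \sum_i \demandX{\GSet_i} \leq m \cdot n$, since each demand is bounded by $\cardin{\USet} = n$. Feeding this into \thmref{wolsey}, the greedy algorithm outputs a cover of size $O\pth{\kopt \log \fmax} = O\pth{\kopt \log(mn)}$, yielding the claimed $O(\log(mn))$ approximation. The one mildly delicate step is the mixed case $h_i(\HSB+\he) < \demandX{\GSet_i} \leq h_i(\HSA + \he)$ inside the truncation argument, where one must invoke submodularity of $h_i$ itself to close the inequality; I expect this to be the main (minor) obstacle, with everything else following routinely.
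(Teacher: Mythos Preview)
Your proposal is correct and follows essentially the same approach as the paper: define the truncated per-ground-set coverage functions $g_i$, sum them to obtain a monotone integer-valued submodular $f$ with $\fmax \le mn$, and invoke \thmref{wolsey}. The paper's own proof is slightly terser about submodularity (it just observes that an edge covers at least as many new elements when added to a smaller collection), whereas you spell out the truncation-preserves-submodularity step, but the argument is the same.
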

\begin{proof}
    Consider a partial solution $\curr \subseteq \HS$. The
    \emph{service} of $\curr$ to $\GSet_i$ is
    \begin{equation*}
        f_i (\curr)%
        =%
        \min\pth{\bigl.  \cardin{\GSet_i \cap \pth{ \cup\curr }} ,
           \,
           \demandX{\GSet_i}},        
    \end{equation*}
    where $\cup \curr = \cup^{}_{\he \in \curr}\, \he$. That is
    $f_i(\curr)$ is the number of elements of $\GSet_i$ the union of
    the edges of $\curr$ covers. Observe that $f_i(\emptyset) = 0$,
    $f_i$ is clearly monotone, and its maximal value is $n$. As for
    submodularity, consider sets $\HSB \subseteq \HSA \subseteq \HS$,
    and an edge $\he \in \HS$, and note that
    \begin{math}
        f_i\pth{ \HSB + \he} - f_i\pth{ \HSB }%
        \geq f_i\pth{ \HSA + \he} - f_i\pth{ \HSA },
    \end{math}
    as $\he$ potentially covers more new elements of $\GSet_i$ when
    added to a smaller cover. For the given \PCMS instance and a
    solution $\HSB \subseteq \HS$, the target function is
    \begin{align*}
      f(\HSB) = \sum_{i=1}^m f_i( \HSB).
    \end{align*}
    The function $f$ is a sum of submodular functions. As such, $f$ is
    submodular itself. Observe that $f(\HS) \leq mn $. Now, using the
    algorithm of \thmref{wolsey} implies the result.
\end{proof}

\begin{remark}
    One can obtain an $O(\log m)$-approximation for \probref{pcms} via
    LP rounding \cite{ky-aacpi-05}, which is useful when $m$ is much
    smaller than $n$.  However, this does not change our final result,
    since the number of ground sets in our reduction is polynomial in
    $n$ (see \lemref{single:set}).
\end{remark}

\subsection{Cutting a set into smaller pieces}
\seclab{demand}

We are given a set-system $(\GSet, \HS)$, where $n = \cardin{\GSet}$.
A set $\HSB \subseteq \HS$ of edges, induces a natural partition of
$\GSet$, where two elements $x, y \in \GSet$ are in the same set of
the partition $\iff$ $x$ and $y$ belong to the same set of edges in
    $\HSB$. Formally, $x \equiv y$ $\iff$ $\HSB \cap x = \HSB \cap y$,
        where $\HSB \cap x = \Set{ f \in \HSB}{x \in f}$.  The
        partition of $\GSet$ induced by $\HSB$ (i.e., the equivalence
        classes of $\equiv$) is the \emphi{arrangement} of $\HSB$,
        denoted by $\ArrX{\HSB}$. A set of $\ArrX{\HSB}$ is a
        \emphi{face} of $\ArrX{\HSB}$. For an element $x \in \GSet$,
        the face of $\ArrX{\HSB}$ that contains $x$ is denoted by
        $\faceY{x}{\HSB}$.

\begin{example}
    For $\GSet = \brc{1,2,3,4,5}$, and
    $\HSB = \bigl\{\Bigl. \brc{ 1,2, 3}, \brc{3,4,5}\bigr\}$, we have
    \begin{equation*}        
        A(\HSB) = \bigl\{ \Bigl. \brc{ 1,2}, \brc{3}, \brc{4,5} \bigr\}.
    \end{equation*}
\end{example}

\begin{problem}[Reduce by half]
    \problab{r:b:h}%
    Given a set system $(\GSet, \HS)$, with $n = \cardin{\GSet}$, find
    a minimum sized set $\HSB \subseteq \HS$ such that every face of
    $A(\HSB)$ is of size at most $n/2$.
\end{problem}

\begin{problem}[\PTD: Partition to demand]
    \problab{ptd}%
    Given a set system $(\GSet, \HS)$, where $n = \cardin{\GSet}$, and
    an integral \emphi{demand} $\demandX{v} \geq 0$, for each
    $v \in \GSet$, find a minimum sized set $\HSB \subseteq \HS$, such
    that for every $v\in \GSet$,
    $\cardin{\faceY{v}{\HSB}} \leq \demandX{v}$.
\end{problem}

Observe that \probref{r:b:h} can be reduced to \probref{ptd} by
setting the demand of every vertex in the ground set to $n/2$.

\begin{lemma}
    \lemlab{single:set}%
    Given an instance $(\GSet, \HS)$ of \PTD, with
    $n = \cardin{\GSet}$, there is a greedy algorithm that computes,
    in polynomial time, an $O( \log n)$-approximation to the optimal
    solution.
\end{lemma}
\begin{proof}
    Consider the complete graph $K_n=(\GSet,E)$, where
    $E = \Set{xy}{x, y \in G}$. For every element $x \in \GSet$,
    consider the associated cut $E_x = \Set{xy}{y \in \GSet - x}$. A
    set $\he \in \HS$ \emph{cuts} $xy$ if
    $\cardin{\he \cap \brc{x,y}} = 1$. In particular, let
    $\cutX{\he} = \Set{ xy }{x \in \he, y \in \GSet \setminus \he}$ be
    the set of edges of $K_n$ that $\he$ cuts.

    Now, a set of edges $\HSA \subseteq \HS$ meets the demand of
    $v \in \GSet$, if the edges of $\HSA$ cut at least
    $n - \demandX{v}$ edges of $E_v$ (e.g., if $\demandX{v}=n-1$, hen
    one needs to cut one edge attached to $v$). Put differently, the
    partial cover $\bigcup_{\he \in \HSA} \cutX{\he}$ covers at least
    $n - \demandX{v}$ edges of $E_v$. Thus, let $\USet' = E$ be the
    universe set, and
    \begin{math}
        \GSSet' = \Set{ E_v }{ v \in \GSet}
    \end{math}
    be the set of ground sets. Here a ground set $E_v \in \GSSet'$ has
    demand $\demandX{E_v} = n - \demandX{v}$. The family of allowable
    sets to be used in the cover is
    \begin{math}
        \HS' = \Set{\cutX{\he} }{ \he \in \HS}.
    \end{math}

    The triple $\pth{ \USet', \GSSet', \HS'}$ is an instance of \PCMS,
    with $n' = |\USet'| = O(n^2)$ and $m' = \cardin{\GSSet'} = n$.
    The greedy algorithm yields an
    $O\bigl( \log (n'm') \bigr)$-approximation in this case, by
    \lemref{pcms}. As $\log (n'm') = O( \log n)$, the claim follows.
\end{proof}

\subsection{Cutting a Ham-Sandwich into small pieces}

\begin{problem}[\RMC: Reduce measures via cuts]
    \problab{rmc} The input is a triplet $(\USet,\GSSet, \HS)$ with
    $n = \cardin{\USet}$. Here
    $\GSSet = \brc{ \GSet_1, \ldots, \GSet_m}$ is a collection of
    ground sets that are not necessarily disjoint, and $\HS$ is a
    collection of edges.  For every ground set $\GSet_i$, there is an
    associated target size $\mu_i \leq \cardin{\GSet_i}$. The problem
    is to compute a minimal set $\Opt \subseteq \HS$, such that, for
    all $i$, and any cell $\cell$ of $\ArrX{\Opt}$, we have
    \begin{math}
        \cardin{ \cell \cap \GSet_i} \leq \mu_i.
    \end{math}
\end{problem}

\begin{lemma}
    \lemlab{smaller:pieces}%
    Given a feasible instance $(\USet,\GSSet, \HS)$ of \RMC with
    $n = \cardin{\USet}$ and $m = \cardin{\GSSet}$, one can compute,
    in polynomial time, an $O( \log (nm ))$-approximation to the
    smallest set $\Opt \subseteq \HS$ that satisfies the given
    instance.
\end{lemma}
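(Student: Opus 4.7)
The plan is to generalize the reduction used in \lemref{single:set}, which encoded a single \PTD instance as a \PCMS instance over the edges of the complete graph on the ground set, with each $\he \in \HS$ interpreted as cutting certain pairs. Since \lemref{pcms} already aggregates contributions of many ground sets by summing their submodular services inside $f$, it should suffice to apply that encoding in parallel across all $m$ members of $\GSSet$ and then invoke \lemref{pcms} once on the combined \PCMS instance.

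Concretely, for each $\GSet_i \in \GSSet$ and each $v \in \GSet_i$, the face constraint $\cardin{\faceY{v}{\HSB} \cap \GSet_i} \leq \mu_i$ is equivalent to requiring that $\HSB$ cut at least $\cardin{\GSet_i} - \mu_i$ pairs from $E_v^i = \Set{\brc{v,y}}{y \in \GSet_i - v}$, in the same sense of ``cut'' as in \lemref{single:set}. I would thus build a \PCMS instance $(\USet', \GSSet', \HS')$ in which $\USet'$ is the disjoint union (tagged by $i$) of all unordered pairs of elements inside each $\GSet_i$, $\GSSet' = \Set{E_v^i}{v \in \GSet_i,\ i=1,\ldots,m}$ with demand $\cardin{\GSet_i} - \mu_i$ assigned to $E_v^i$, and each $\he \in \HS$ is mapped to the sub-collection of pairs in $\USet'$ that it cuts. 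Because the face of a vertex in $\ArrX{\HSB}$ is determined purely by which edges of $\HSB$ contain it, a sub-collection $\HSB \subseteq \HS$ is feasible for the original \RMC instance if and only if its image in $\HS'$ meets every \PCMS demand.

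The final step is to apply \lemref{pcms} to this \PCMS instance. Both $\cardin{\USet'}$ and $\cardin{\GSSet'}$ are polynomial in $n$ and $m$, and every demand is at most $n$, so the maximum value $\fmax$ of the resulting submodular target is polynomial in $nm$. \lemref{pcms} then yields, in polynomial time, a solution of size $O(\log \fmax) = O(\log(nm))$ times the optimum, which is the stated bound. I do not anticipate a serious obstacle; the one thing to verify carefully is that covering the ``missing'' cuts in the \PCMS encoding really does simultaneously shrink the arrangement faces with respect to \emph{every} $\GSet_i$, and this is immediate from the definition of $\ArrX{\HSB}$.
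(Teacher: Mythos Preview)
Your proposal is correct and follows essentially the same route as the paper: both reduce \RMC to the pair-cut encoding of \lemref{single:set} applied in parallel across the $m$ ground sets, and then invoke the submodular greedy bound. The only cosmetic difference is that the paper first packages each $\GSet_i$ as a \PTD instance on all of $\USet$ (assigning the trivial demand $n$ to vertices outside $\GSet_i$) and sums the resulting submodular functions before applying \thmref{wolsey}, whereas you build a single \PCMS instance directly and appeal to \lemref{pcms}; the underlying target function and the resulting $O(\log(nm))$ bound are identical.
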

\begin{proof}
    For a set $\GSet_i \in \GSSet$, and an element $v \in \USet$, let
    $\demandY{i}{v} = \mu_i$ if $v \in \GSet_i$, and otherwise
    $\demandY{i}{v} = n$. The pair $(\USet, \HS)$ with the demand
    function $\demandY{i}{\cdot}$ form an instance of \PTD
    (\probref{ptd}), and its approximation algorithm
    \lemref{single:set} has an associated submodular function
    $f_i (\cdot)$, that is non-negative, monotone, submodular and has
    maximum value $n^2$.

    Consider the submodular function $f = \sum_i f_i$, and let
    $\fmax = f( \HS)$. Clearly, $f$ is submodular, monotone, and has
    maximum value $mn^2$. Furthermore, a subset $\HSA \subseteq \HS$
    such that $f(\HSA) = \fmax$ is a valid solution to the given
    instance. As such, one can plug this into the algorithm of
    \thmref{wolsey} and get the desired approximation.
\end{proof}

With all of the ingredients assembled, we are ready to tackle
\probref{prob}.

\begin{theorem}
    \thmlab{the:result}%
    Let $\P_1, \ldots, \P_m$ be $m$ (not necessarily disjoint) point
    sets in $\Re^d$, where $n = \sum_i \cardin{ \P_i}$.  For each
    point set $\P_i$, we are given an integer parameter
    $0 < \mu_i \leq \cardin{\P_i}$.  The task at hand is to compute a
    minimal set of hyperplanes $H$ such that for every cell $\cell$ in
    the arrangement $\ArrX{H}$, $\cell$ contains at most $\mu_i$
    points of $\P_i$, for all $i = 1, \ldots, m$.  One can
    $O\bigl(\log(mn)\bigr)$-approximate, in $O(m n^{d+3})$ time, the
    optimal solution.
\end{theorem}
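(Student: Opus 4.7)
The plan is to reduce the hyperplane-cutting problem to an instance of \RMC (\probref{rmc}) and then invoke \lemref{smaller:pieces}. The universe will be $\USet = \bigcup_i \PSet_i$, the ground sets are the $\PSet_i$ with target sizes $\mu_i$, and the only real work is to construct a finite family of edges $\HS$ that faithfully represents the hyperplanes of $\Re^d$.

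The main obstacle is that there are infinitely many hyperplanes, so the first step is a discretization. Call two hyperplanes \emph{combinatorially equivalent} with respect to $\USet$ if they induce the same partition of $\USet$ into points strictly above and strictly below (using a symbolic perturbation to resolve any point lying on the hyperplane). By standard point-hyperplane duality, the equivalence classes correspond to the cells in the arrangement of the $n$ hyperplanes dual to $\USet$ in $\Re^d$, of which there are $O(n^d)$, and one representative per class can be enumerated in polynomial time. For each such class, pick a representative $h$ and add the edge $\he_h = \USet \cap h^+$ to $\HS$.

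With this choice, for any collection of hyperplanes $H$ the partition of $\USet$ induced by the faces of $\ArrX{H}$ agrees exactly with the \RMC arrangement induced by $\brc{\he_h}_{h \in H}$: two points lie on the same side of every hyperplane in $H$ iff they belong to exactly the same edges. Replacing every hyperplane in an optimal $H^*$ by the representative of its combinatorial class therefore yields a feasible \RMC solution of the same size, and the reverse direction is immediate. Feeding $\pth{\USet, \brc{\PSet_i}, \HS}$ into the algorithm of \lemref{smaller:pieces} consequently returns an $O(\log(mn))$-approximation to the geometric optimum.

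What remains is the running-time bookkeeping. Constructing the dual arrangement and extracting $\HS$ costs $O(n^{d+1})$ time. At each iteration of the greedy, the marginal gain of each of the $O(n^d)$ candidate edges is computed in $O(mn)$ time by recording how that edge refines the current partition across the $m$ ground sets. The greedy terminates within $O(k \log \fmax)$ iterations by \thmref{wolsey}, which is $O(n \log(mn))$ since $k \le n$ (one hyperplane per point is always feasible) and $\fmax \le mn^2$. Multiplying out gives the claimed $O(mn^{d+3})$ bound. The only conceptually nontrivial ingredient is the finite discretization of hyperplanes; everything else is mechanical.
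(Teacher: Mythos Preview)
Your proof is correct and follows essentially the same approach as the paper: reduce to \RMC by taking $\USet = \bigcup_i \PSet_i$, ground sets $\PSet_i$ with targets $\mu_i$, and edges given by the $O(n^d)$ combinatorially distinct halfspaces, then invoke \lemref{smaller:pieces}. You actually supply more justification than the paper does for the discretization step (via duality) and for why the combinatorial partition matches the geometric one; the paper simply asserts these. Your running-time bookkeeping differs in detail---you charge $O(mn)$ per edge evaluation and $O(n\log(mn))$ iterations, whereas the paper charges $O(mn^2)$ per evaluation and asserts at most $n$ iterations---but both routes land within the stated $O(mn^{d+3})$ bound.
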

\begin{proof}
    The reduction is straightforward and uses \lemref{smaller:pieces}.
    Let the shared ground set be $\USet = \cup_i \P_i$. Let $\GSSet$
    be the family of ground sets
    $\Set{\GSet_i = \P_i}{i = 1, \ldots, m}$. Finally, let $H$ be the
    (finite) number of combinatorially different hyperplanes with
    respect to $\USet$. For each $h \in H$, let $h^+$ be one of the
    two halfspaces bounded by $h$ (which halfspace is not important --
    taking the other one corresponds to ``flipping'' the corresponding
    coordinate of the signature induced the arrangement).  Add the
    set $\Set{p \in U}{p \in h^+}$ to the collection of subsets
    $\HS$. The values $\mu_i$ remain unchanged.  This forms an
    instance of \probref{rmc}, and thus we can apply
    \lemref{smaller:pieces} to obtain the desired separating
    hyperplanes.

    As for the running time, computing the set system takes
    $O(n^{d+2})$ time by brute force. Indeed, unraveling the above
    reduction, the shared ground set is made of $\binom{n}{2}$ pairs
    of points of $\USet$. Every point has up to $m$ different sets of
    such pairs that needs to be partially covered. Fortunately, there
    are only $O(n^d)$ edges in the resulting set system. Evaluating
    the contribution of a new edge (in the set system) to the target
    function takes $O(n^2m)$ time. As there are $O(n^d)$ edges in set
    system, it follows that evaluating all edges takes $O(n^{d+2} m)$
    time. Finally, it is easy to verify that the algorithm performs at
    most $n$ iterations.
\end{proof}

    No effort was made to improve the running time of the algorithm of
    \thmref{the:result}.

\section{Open problems}

The most natural open problem is to try and further improve the
approximation quality of \thmref{the:result}. The same applies to all
the other problems here, which potentially might have better
approximation ratios because of the underlying geometry. On the other
hand, it would be interesting to prove (conditional) lower bounds on
the hardness of approximation of these problems.

\AlgVer{\bmhead{Acknowledgments}}%

\paragraph*{Acknowledgments} %
\AlgVer{%
   Sariel Har-Peled was partially supported by NSF AF awards
   CCF-1421231, CCF-1217462, and CCF-1907400. Mitchell Jones was
   partially supported by NSF AF awards CCF-1421231 and CCF-1907400.

}%
The authors also thank the anonymous referees for their detailed and
useful feedback.

\BibTexMode{%
   \bibliographystyle{alpha}%
   \bibliography{split_splat}%
}%

\BibLatexMode{\printbibliography}

\end{document}